\newtheorem{theo.a}{Proposition}
\long\def\comment#1{}
\newcommand{\be}{\begin{equation}}
\newcommand{\ee}{\end{equation}}
\newtheorem{theorem}{Theorem}
\newfont{\bbb}{msbm10 scaled 700}
\newfont{\bb}{msbm10 scaled 1100}
\newcommand{\ev}{{\bf e}}
\newcommand{\xv}{{\bf x}}
\newcommand{\yv}{{\bf y}}
\newcommand{\Am}{{\bf A}}
\newcommand{\Hm}{{\bf H}}
\newcommand{\Sm}{{\bf S}}
\newcommand{\Wm}{{\bf W}}
\newcommand{\Ac}{{\cal A}}
\newcommand{\Ic}{{\cal I}}
\newcommand{\Kc}{{\cal K}}
\newcommand{\Nc}{{\cal N}}
\newcommand{\Pc}{{\cal P}}
\newcommand{\Sc}{{\cal S}}
\newcommand{\RNum}[1]{\uppercase\expandafter{\romannumeral #1\relax}}
\newcommand{\Sigmam}{\hbox{\boldmath$\Sigma$}}
\newcommand{\eqdef}{\stackrel{\Delta}{=}}
\begin{document}

\title{Information Theoretic Data Injection Attacks with Sparsity Constraints \\
\thanks{This research was supported in part by the European Commission through the H2020-MSCA-RISE-2019 program under grant 872172 and in part by the China Scholarship Council.}
}

\author{\IEEEauthorblockN{Xiuzhen Ye$^*$, I\~naki Esnaola$^{*\dag}$, Samir M. Perlaza $^{\S\dag}$, and Robert F. Harrison$^{*}$}

\IEEEauthorblockA{$^*$Dept. of Automatic  Control and Systems Engineering, University of Sheffield, Sheffield S1 3JD, UK\\
 $^\dag$Dept. of Electrical Engineering, Princeton University, Princeton, NJ 08544, USA\\
 $^\S$INRIA, Centre de Recherche de Sophia Antipolis - Méditerranée, France\\
}
}
\maketitle
\begin{abstract}
Information theoretic sparse attacks that minimize simultaneously the information obtained by the operator and the probability of detection are studied in a Bayesian state estimation setting. The attack construction is formulated as an optimization problem that aims to minimize the mutual information between the state variables and the observations while guaranteeing the stealth of the attack. Stealth is described in terms of the Kullback-Leibler (KL) divergence between the distributions of the observations under attack and without attack. To overcome the difficulty posed by the combinatorial nature of a sparse attack construction, the attack case in which only one sensor is compromised is analytically solved first. The insight generated in this case is then used to propose a greedy algorithm that constructs random sparse attacks. The performance of the proposed attack is evaluated in the IEEE 30 Bus Test Case.
\end{abstract}
\section{Introduction}
State estimation enables efficient, scalable, and secure operation of power systems. 
This is in part thanks to monitoring and control processes that are supported by Supervisory Control and Data Acquisition (SCADA) systems and more recently by advanced communication systems that acquire and transmit observations to a state estimator~\cite{AA_PSstateestimation_04}.
This cyber layer exposes the system to malicious attacks that exploit the vulnerabilities of the sensing and communication infrastructure. One of the main threats faced by modern power systems are data injection attacks (DIAs)~\cite{LY_TISSEC_11} {that} alter the state estimate of the operator by compromising the system observations.
A large body of literature is concerned with the case in which attack detection is performed by a residual test~\cite{VO_SGC_22} under the assumption that state estimation is deterministic. In this setting, constructing DIAs that require access to a small set of observations {yields} optimization problems with sparsity constraints, which are often difficult to solve. In~\cite{KP_TSG_11}, it is shown that the operator can secure a small fraction of observations to make the attack construction significantly harder. This problem has been studied extensively in the literature in both centralized and decentralized scenarios~\cite{TA_SGC_11},~ \cite{CKKPT_SPM_12},~\cite{MO_JSAC_13},~\cite{EPP_gsip_14}.

The unprecedented data acquisition capabilities in the smart grid {elevate} the threat of attack precisely because accurate stochastic models can be generated for the system. In view of this, attack constructions that exploit this prior knowledge can be posed within a Bayesian framework~\cite{OK_TSG_11}. In this setting, the attack detection problem is no longer cast as a residual test. Instead, detection strategies consider the likelihood ratio test~\cite{IE_TSG_16} or alternatively machine learning methods~\cite{OM_TNNLS_16}.
The operator produces a stochastic model of the system based on the observations generated by the monitoring system.
Moreover, data analytics on the system depend on the reliability of the observations that are used with a variety of estimation, statistical and machine learning tools that provide the operator with different insight.
In view of this, it is essential to assess attacks in fundamental terms to understand the impact over a wide range of estimation and data analysis paradigms.

Information theoretic attacks are first introduced in~\cite{KS_SGC_17} and then generalized in~\cite{KS_TSG_19}. In this approach, attack disruption is measured in terms of two information measures: (a) the mutual information between the state variables and the observations under attack; and (b) the probability of detection, which is governed by the Kullback-Leibler (KL) divergence. The advantage of using these information measures is that the attack disrupts a wide range of estimation, statistical and machine learning methods that are available to the operator.
Given that the attack vector corrupts the observations in an additive fashion, mutual information minimization yields a Gaussian attack construction that has the maximum entropy, i.e. maximum uncertainty, among all the distributions with fixed variance~\cite{book_EIT}. From a practical point of view, the assumption is {validated} given the data shared by Electricity North West Limited~\!\!{\cite{GE_TSG_18}}.
In this case, mutual information decreases monotonically with the variance of the attack vector entries~\cite{SI_TIT_13} and the converse holds for the probability of attack detection.
The information theoretic attacks in~\cite{KS_TSG_19} require that the attacker {tampers} with all the observations used by the operator~\cite{SE_ICASSP_19}. Hence, incorporating sparsity constraints with information theoretic attacks is still an open problem that requires novel approaches. In this paper, we present a novel information theoretic sparse attack construction based on a greedy observation selection mechanism.

A brief description of notation follows. Consider matrix $\Am\in\mathds{R}^{m\times n}$, then $(\Am)_{ij}$ denotes the entry in row $i$ and column $j$. We denote by $\Am_{\Ic}$ the matrix formed with the rows of $\Am\in\mathds{R}^{m\times n}$ given by the indices in $\Ic\subseteq\{1, \ldots, m\}$ {in increasing order}. We denote the complement of set $\Ic$ by ${\Ic^{\sf{c}}}$. The elementary vector $\ev_i$ is a vector of zeros with a one in {the entry $i$}. Random variables are denoted by capital letters and their realizations by the correponding lower case, e.g. $x$ is a realization of the random variable $X$. Vectors of $n$ random variables are denoted by a superscript, e.g. $X^n=(X_1, \ldots, X_n)$ with corresponding realizations denoted by $\xv$. The set of positive semidefinite matrices of size $n\times n$ is denoted by $S_{+}^n$.
\section{System model}\label{system model}
\subsection{Power system state estimation}
In a power system the state vector $\xv \in{\mathds{R}^n}$ containing the voltages and angles at all the generation and load buses describes the operation state of the system. State vector $\xv$ is observed by the acquisition function $F: {\mathds{R}^n} \rightarrow {\mathds{R}^m}$. A linearized observation model is considered for state estimation, yielding the observation model
\begin{equation}\label{eq:obs_noattack}
  Y^m  = \textbf{H}\xv+Z^m,
\end{equation}
where $\textbf{H} \in {\mathds{R}^{m \times n}}$ is the Jacobian of {the function $F$ at a given operating point and is determined by the system components and the topology of the network}. The vector containing observations $Y^m$ is corrupted by {additive white Gaussian noise} introduced by the sensors~{\cite{AA_PSstateestimation_04},~\cite{GJ_PSanalysis_1994}}. The noise vector $Z^m$ follows a multivariate Gaussian distribution $Z^m \sim \mathcal{N}(\textbf{0},\sigma^2 \textrm{\textbf{I}}_m)$, where $\sigma^2$ is the noise variance.

In a Bayesian estimation framework, the state variables are described by a vector of random variables $X^n$ with a given distribution. As the Gaussian distribution has the maximum entropy among all distributions with the same variance, we assume $X^n$ follows a multivariable Gaussian distribution with zero mean and covariance matrix $\bm{\Sigma}_{X\!X} \in S_{+}^n$. From~(\ref{eq:obs_noattack}), it follows that the vector of observations is zero mean and with covariance matrix
\begin{equation}\label{2} 
  \bm{\Sigma}_{Y\!Y} = \textbf{H}\bm{\Sigma}_{X\!X}\textbf{H}^{\sf{T}}+\sigma^2 \textrm{\textbf{I}}_m.
\end{equation}

{The resulting observations are corrupted by the malicious attack vector
\begin{equation}\label{3}
  A^m \sim P_{A^m},
\end{equation}
where $P_{A^m}$ is the distribution of the random attack vector $A^m$}.
Since the Gaussian distribution minimizes the mutual information between the state variables and the compromised observations {with} a fixed covariance matrix~\cite{SI_TIT_13}, we adopt a Gaussian random attack {framework} given by
\begin{equation}\label{eq:Gauss_attack}
  A^m \sim \mathcal{N} (\textbf{0}, \bm{\Sigma}_{A\!A}),
\end{equation}
where $\bm{\Sigma}_{A\!A}$ is the covariance matrix of attack vector $A^m$. Consequently, the compromised observations denoted by $Y_A^m$ are given by
\begin{equation}
\label{eq:obs_attack}
  Y_A^m  = \textbf{H}X^n+Z^m + A^m,
\end{equation}
where $Y_A^m$ follows a multivariate Gaussian distribution given by
\begin{equation}\label{6}
  Y^m_A  \sim \mathcal{N} (\textbf{0},\bm{\Sigma}_{Y_A\!Y_A})
\end{equation}
with $\bm{\Sigma}_{Y_A\!Y_A} = \textbf{H}\Sigmam_{X\!X}\textbf{H}^{\sf{T}} + \sigma^2 \textrm{\textbf{I}}_m + \bm{\Sigma}_{A\!A}$.
\subsection{Attack Detection}
As a part of a security strategy, the operator implements an attack detection procedure prior to performing state estimation. Detection is cast as a hypothesis testing problem given by:
\begin{align}\label{eq:hypoth_attack}
  \mathcal{H}_0&:\textrm{There is no attack,}\\ 
  \mathcal{H}_1&:\textrm{Observations are compromised}.
\end{align}
In this setting, the optimal test is the likelihood ratio test (LRT)~\cite{JN_LRT_33} given by
\begin{equation}\label{lrt}
L(\textbf{\textrm{y}}) = \frac{f_{Y_A^m}(\textbf{\textrm{y}})}{f_{Y^m}(\textbf{\textrm{y}})} \overset{{\cal H}_1}{\underset{{\cal H}_0}{\gtrless}} \tau,
\end{equation}
where $\textbf{\textrm{y}}$ is the realization of the observations to be tested for attack; $f_{Y_A^m}(\textbf{\textrm{y}})$ is the probability density function (pdf) of $Y_A^m$ in (\ref{eq:obs_attack}), $f_{Y^m}(\yv)$ is the pdf of $Y^m$ in~(\ref{eq:obs_noattack}), and $\tau\in\mathds{R}_+$ in~(\ref{lrt}) is the decision threshold.
The performance of the test is assessed in terms of the Type I error, defined as $\alpha\eqdef \mathds{P}\left[L(\bar{Y}^m)\geq\tau\right]$ with $\bar{Y}^m\thicksim P_{Y^m}$, and the Type II error, denoted by $\beta\eqdef \mathds{P}\left[L(\bar{Y}^m)<\tau\right]$ with $\bar{Y}^m\thicksim P_{Y_A^m}$.
Note that the LRT is optimal, and therefore, changing the value of $\tau$ is equivalent to changing the tradeoff between Type I and Type II errors.

\section{Sparse {information theoretic} attacks}\label{sparse construction}
\subsection{{Information theoretic} setting}
The attack construction in~\cite{KS_TSG_19} incorporates a detection constraint based on the KL divergence between the distributions $P_{Y^m_A}$ in~(\ref{eq:obs_attack}) and $P_{Y^m}$ in~(\ref{eq:obs_noattack}) which results in the construction of {\it stealth attacks}. Specifically, the construction is given by the solution to the following optimization problem:
\begin{equation}\label{eq:stealth_opt}
  \min_{P_{A^m}} I(X^n;Y^m_A)+ \lambda D(P_{Y_A^m}\|P_{Y^m}),
\end{equation}
where $I(X;Y)$ is the mutual information between random variables $X$ and $Y$, $D(P\| Q)$ denotes the KL divergence between distributions $P$ and $Q$, and $\lambda\geq 1$ is the weighting parameter that determines the tradeoff between attack disruption and probability of detection. Note that the optimization in~\eqref{eq:stealth_opt} searches for the distribution of the attack vector of random variables over the set of Gaussian multivariate distributions of $m$ dimensions, or equivalently, it chooses the optimal covariance matrix for the distribution of the attack. It is shown in \cite{KS_TSG_19} that the optimal Gaussian attack is given by $\bar{P}_{A^m}=\Nc(\mathbf{0},\bar{\Sigmam})$ where
\be\label{eq:stealth_cov}
\bar{\Sigmam}={\lambda^{-1}}\Hm\Sigmam_{X\! X}\Hm^{\sf T}.
\ee
Note that in~\cite{KS_TSG_19}, the construction of the stealth attack vector is not sparse, indeed all the components of the attack realizations are nonzero with probability one, i.e. $\mathds{P}\left[|\textnormal{supp}({A^m})|=m\right]=1$. We define the support of the attack vector ${A^m}$ by
\be
\textnormal{supp}({A^m})\eqdef\left\{i:\mathds{P}\left[A_i=0\right]=0\right\}.
\ee
\subsection{Sparse attack formulation}
Given that the operator is likely to have access control policies in place~\cite{Scada_book}
, an attack construction that requires access to all the observations is costly and unrealistic for the attacker in most scenarios. For that reason, in the following we study stealth attack constructions that require access to a limited number of sensors. In particular, we pose the optimization problem with sparsity constraints by considering distributions over the attack vector that put non-zero mass on at most $k\leq m$ {attack vector components}.
Thus, we include the additional requirement that $\left| \textnormal{supp}({A^m})\right|=k$ in the attack construction. In view of this, the attacker chooses the distribution of the attack vector over the set of multivariate Gaussian distributions given by
\be
\Pc_k\eqdef\left \{P_{A^m}:\left| \textnormal{supp}({A^m})\right | =k\right \}.
\ee
The resulting $k$-sparse stealth attack construction is therefore posed as the optimization problem:
\be\label{eq:k_sparse_stealth_opt}
  \min_{P_{A^m}\in\Pc_k} I(X^n;Y^m_A)+ \lambda D(P_{Y_A^m}\|P_{Y^m}).
\ee
Solving this problem is hard in general owing to the combinatorial nature of the attack vector support selection. For that reason, in Section \ref{optimal sparse construction} we tackle the problem by  proposing a greedy attack construction algorithm that results in $k$-sparse attack vectors.
\subsection{Gaussian sparse attack construction}

In the following, we particularize the attack construction in~\eqref{eq:stealth_opt} by considering Gaussian distributed state variables, i.e. $X^n\thicksim\Nc(\mathbf{0},\Sigmam_{X\!X})$, and assuming that the attack vector follows the Gaussian distribution given in~\eqref{eq:Gauss_attack}. In this setting, the optimization problem in~\eqref{eq:stealth_opt} is equivalent~\cite{KS_TSG_19} to the following optimization problem:
\be
\begin{aligned}\label{eq:Gaussian_stealth_constr}
\min_{\bm{\Sigma}_{A\!A} \in S_{+}^m } & (1-\lambda) \log |\textbf{I}_m + \Wm\bm{\Sigma}_{A\!A}| \\
&- \log | \sigma^2\textbf{I}_m + \bm{\Sigma}_{A\!A}|  + \lambda \textrm{tr}(\Wm\bm{\Sigma}_{A\!A})),
\end{aligned}
\ee
where $\Wm\eqdef\Sigmam^{-1}_{Y\!Y}$. In order to incorporate sparsity constraints in~\eqref{eq:Gaussian_stealth_constr}, the minimization domain is constrained to the set of covariance matrices that induce $k$-sparse supports over the attack vectors, i.e., the set given by
\be
\label{eq:cov_sparse}
\Sc_k\eqdef \left\{\Sm\in S_{+}^m: \| \textnormal{diag}(\Sm) \|_0=k \right\},
\ee
where $\textnormal{diag}(\Sm)$ denotes the vector formed by the diagonal entries of $\Sm$.
Solving~\eqref{eq:Gaussian_stealth_constr} within the optimization domain specified by~\eqref{eq:cov_sparse} re-casts the equivalent $k$-sparse stealth attack construction problem {in~\eqref{eq:k_sparse_stealth_opt} as follows}:
\be
\begin{aligned}\label{eq:Gaussian_k_stealth_constr}
\min_{\bm{\Sigma}_{A\!A} \in \Sc_k } & (1-\lambda) \log |\textbf{I}_m + \Wm\bm{\Sigma}_{A\!A}| \\
&- \log | \sigma^2\textbf{I}_m + \bm{\Sigma}_{A\!A}|  + \lambda \textrm{tr}(\Wm\bm{\Sigma}_{A\!A})).
\end{aligned}
\ee
\subsection{Optimal single observation attack case}\label{sec:single}
Despite having narrowed it down to Gaussian distributions, the above optimization problem is still {challenging} and combinatorial in nature. For that reason, we first tackle the case in which the attacker only comprises one sensor, i.e. $k=1$.  The rationale for this is that we use the insight developed for the single sensor case in the construction of the general $k$-sparse case. The following theorem provides the optimal solution for the case in which the attacker corrupts a single sensor.
\begin{theorem}\label{Theorem1}
The solution to the sparse stealth attack construction problem in~(\ref{eq:Gaussian_k_stealth_constr}) for the case $k=1$ is given by
\begin{equation}\label{26}
\bar{\Sigmam}_{A\!A}=\bar{\sigma}^2\ev_{\alpha}\ev_{\alpha}^{\sf T},
\end{equation}
where
\begin{IEEEeqnarray}{ll}
\label{eq:alpha}
\alpha&=\textnormal{arg}\min_{i} \left\{(\Wm)_{ii}\right\},\\
\label{eq:sigma_opt}
\bar{\sigma}^2&=- \frac{\sigma^2}{2} + \frac{1}{2}\left( \sigma^4 -\frac{4 (\underbar{$w$}\sigma^2 -1)}{\lambda \underbar{$w$}^2} \right)^{\frac{1}{2}},
\end{IEEEeqnarray}
with $\underbar{$w$}\eqdef (\Wm)_{\alpha\alpha}$.
\end{theorem}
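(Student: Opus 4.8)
The plan is to collapse the matrix program in \eqref{eq:Gaussian_k_stealth_constr} with $k=1$ into a one‑dimensional problem, solve that scalar problem in closed form, and then optimise over the single sensor that is allowed to be compromised. First I would note that any $\Sm\in\Sc_1$ is positive semidefinite with exactly one nonzero diagonal entry, say at index $\alpha$; since the vanishing of a diagonal entry of a positive semidefinite matrix forces the whole corresponding row and column to vanish, every such $\Sm$ is necessarily of the form $t\,\ev_\alpha\ev_\alpha^\transp$ with $t>0$. Hence \eqref{eq:Gaussian_k_stealth_constr} for $k=1$ is a joint minimisation over the index $\alpha\in\{1,\dots,m\}$ and the scalar $t\ge 0$.

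Next I would substitute $\Sigmam_{A\!A}=t\,\ev_\alpha\ev_\alpha^\transp$ into the objective. Writing $w\eqdef(\Wm)_{\alpha\alpha}$ (the quantity $\underline{w}$ of the statement), which is strictly positive because $\Wm=\Sigmam_{Y\!Y}^{-1}$ is positive definite, the matrix determinant lemma gives $\log|\Id_m+t\Wm\ev_\alpha\ev_\alpha^\transp|=\log(1+tw)$ and $\log|\sigma^2\Id_m+t\,\ev_\alpha\ev_\alpha^\transp|=\log(\sigma^2+t)$ up to an additive constant, while $\trace(\Wm\,t\,\ev_\alpha\ev_\alpha^\transp)=tw$. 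Thus, dropping the constant, the objective reduces to the scalar function $g(t;w)\eqdef(1-\lambda)\log(1+tw)-\log(\sigma^2+t)+\lambda t w$.

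For fixed $\alpha$ (equivalently fixed $w$) I would minimise $g(\cdot\,;w)$ over $t\ge 0$. Setting $g'(t;w)=0$ and clearing denominators reduces the stationarity condition to the quadratic $\lambda w^2 t^2+\lambda\sigma^2 w^2 t+(\sigma^2 w-1)=0$. Because $\Sigmam_{Y\!Y}\succeq\sigma^2\Id_m$ one has $w\le 1/\sigma^2$, so the constant term of this quadratic is nonpositive and it has exactly one root in $[0,\infty)$; combined with $g'(0;w)=w-1/\sigma^2\le 0$ and $g'(t;w)\to\lambda w>0$ as $t\to\infty$, that root is the global minimiser on $[0,\infty)$. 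Taking the nonnegative branch of the quadratic formula yields exactly $\bar{\sigma}^2$ of \eqref{eq:sigma_opt}, and since the discriminant is at least $\sigma^4$ we get $\bar{\sigma}^2\ge 0$ as required.

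It remains to choose $\alpha$, and this is the step I expect to be the main obstacle: I must show that the optimal value of the scalar problem is a nondecreasing function of $w$, so that the best index is the one minimising $(\Wm)_{ii}$, which is \eqref{eq:alpha}. I would argue by monotone comparative statics: for every fixed $t\ge 0$, $\partial g(t;w)/\partial w=t\,(1+\lambda t w)/(1+tw)\ge 0$, so $g(t;w)$ is nondecreasing in $w$. Consequently, writing $v(w)\eqdef\min_{t\ge 0}g(t;w)$ and $t^\star(w)$ for the minimiser of the previous step, for $w_1\le w_2$ one has $v(w_1)=g(t^\star(w_1);w_1)\le g(t^\star(w_2);w_1)\le g(t^\star(w_2);w_2)=v(w_2)$, so $v$ is nondecreasing. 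Hence the optimal active index is $\alpha=\arg\min_i(\Wm)_{ii}$, and substituting back gives $\bar{\Sigmam}_{A\!A}=\bar{\sigma}^2\ev_\alpha\ev_\alpha^\transp$ as in \eqref{26}. The only remaining care is the degenerate boundary case $w=1/\sigma^2$ (where $\bar{\sigma}^2=0$), which cannot occur unless $\Hm\Sigmam_{X\!X}\Hm^\transp$ is singular in the relevant direction and can be excluded or treated separately.
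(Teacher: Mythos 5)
Your proposal is correct and follows the same skeleton as the paper's proof: restrict $\Sc_1$ to rank-one matrices $t\,\ev_\alpha\ev_\alpha^{\sf T}$, reduce the objective to the scalar function $(1-\lambda)\log(1+tw)-\log(\sigma^2+t)+\lambda tw$, select the index by a monotonicity argument in $(\Wm)_{ii}$, and obtain $\bar{\sigma}^2$ from the stationarity condition. The two places where your route differs are in how the key steps are justified, and in both you are somewhat more complete than the paper. For the index selection, the paper changes variables to $t=1+(\Wm)_{ii}\bar{\sigma}^2$, observes that the cost is convex in $t$ with its unconstrained minimizer at a negative value of $(\Wm)_{ii}$, and concludes that on the feasible region $(\Wm)_{ii}>0$ the cost is increasing in $(\Wm)_{ii}$; you instead compute $\partial g/\partial w = t(1+\lambda tw)/(1+tw)\geq 0$ and run the comparative-statics chain $v(w_1)\leq g(t^\star(w_2);w_1)\leq v(w_2)$ on the value function, which cleanly handles the fact that the optimal variance depends on the chosen index -- a point the paper's $\min_{\bar{\sigma}}\min_i$ ordering leaves implicit. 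For the variance, the paper simply asserts that the outer cost ``has a single minimizer'' equal to the stated expression, whereas you derive the quadratic $\lambda w^2t^2+\lambda\sigma^2w^2t+(\sigma^2w-1)=0$, use $0<w\leq 1/\sigma^2$ to show it has exactly one nonnegative root, and check the boundary behaviour of $g'$ to confirm that root is the global minimizer on $[0,\infty)$. Both additions are worthwhile rigor rather than a different idea.
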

\begin{proof}
We start by noting that for $k=1$ the set of attack covariance matrices is given by
\be
\Sc_1\eqdef\!\!\!\!\bigcup_{i=1,\ldots, m}\!\!\!\!\left\{\Sm\in S_{+}^m\!: \!\Sm=\sigma^2_i \ev_i\ev_i^{\sf T} \textnormal{with}\; \sigma_i\!\in\!\mathds{R}_+\right\}.
\ee
The covariance matrices in set $\Sc_1$ comprise matrices with a single nonzero element in the diagonal. The non-zero entry $i$ denotes the index of the sensor that is attacked.
Let $i \in \{1,2,...,m\}$ be the index of the non-zero entry of the covariance matrix $\bar{\Sigmam}_{A\!A}$. The non-zero entry denoted by $\sigma_i^2$ is the variance of the random variable used to attack observation $i$.

Let $\lambda >1$ and restrict the optimization domain in~(\ref{eq:Gaussian_k_stealth_constr}) to $\Sc_1$. {Thus, the resulting optimization problem is equivalent to:}
\begin{equation}\label{eq:opt_k1}
 {\min_{\bar{\sigma}>0}} \min_{i} \log \frac{(1+(\Wm)_{ii}\bar{\sigma}^2)^{ 1-\lambda } }{(\sigma^2 + \bar{\sigma}^2)}  + \lambda (\Wm)_{ii} \bar{\sigma}^2.
\end{equation}
We proceed by solving the inner part of the optimization problem above. Consider the cost given by
{
\begin{equation}\label{inner_opt}
f((\Wm)_{ii}) \eqdef \log \frac{(1+(\Wm)_{ii}\bar{\sigma}^2)^{ 1-\lambda } }{(\sigma^2 + \bar{\sigma}^2)} \! + \!
\lambda (\Wm)_{ii} \bar{\sigma}^2,
\end{equation}}
which can be rewritten as
\begin{equation}\label{eq:ft}
f(t)=(1-\lambda)\log t - \log (\sigma^2 + \bar{\sigma}^2) + \lambda t - \lambda,
\end{equation}
where $t=1+(\Wm)_{ii}\bar{\sigma}^2$. It follows that~\eqref{eq:ft} is convex with respect to $t$ because $\lambda t$ is a linear term and $(1-\lambda)\log t$ is convex in $t$ for $\lambda > 1$. Therefore, $f((\Wm)_{ii})$ is convex with respect to $(\Wm)_{ii}$ and the minimum is attained for $(\Wm)_{ii} = - \frac{1}{\lambda \bar{\sigma}^2 }$. Since $(\Wm)_{ii}>0$ the inner minimization in~\eqref{eq:opt_k1} is equivalent to selecting the index $i$ that minimizes $(\Wm)_{ii}$. The definition of $\alpha$ in~{\eqref{eq:alpha}} and $\underbar{$w$}$ {in~\eqref{eq:sigma_opt}} follow from this observation.

\begin{algorithm}
  \caption{$k$-sparse stealth attack construction}
  \label{alg:greedy}
  \begin{algorithmic}[1]
    \Require the observation matrix $\Hm$; the covariance matrix of the state variables $\Sigmam_{X\!X}$; the variance of the noise $\sigma^2$; and the weighting parameter $\lambda$; {number of nonzero attack vector components {$k$}}.
    \Ensure the covariance matrix of the attack vector $\bar{\Sigmam}_{AA}$; and the set of indices of attacked sensors $\Ac$.
    \State Set $\Ac_0=\left\{\emptyset \right\}$
    \For {$j=1$ to $k$}
    \State Set $\Hm_j={\Hm_{\Ac^{\sf c}_{j-1}}}$
    \State Compute $\Wm_j=\left(\Hm_j\Sigmam_{X\!X}\Hm_j^{\sf T}+\sigma^2\mathbf{I}_{|{\Ac^{\sf c}_{j-1}}|}\right)^{-1}$
    \State Set $\alpha_j=\textnormal{arg}\min_{i} \left\{(\Wm_j)_{ii}\right\}$,
    \State Set {$\underline{w}_{j} \eqdef (\Wm_j)_{\alpha_j\alpha_j}$}
    \State Set $\bar{\sigma}_j^2=- \frac{\sigma^2}{2} + \frac{1}{2}\left( \sigma^4 -\frac{4 (\underbar{$w$}_j\sigma^2 -1)}{\lambda \underbar{$w$}_j^2} \right)^{\frac{1}{2}}$
    \State Set $\Ac_j = \Ac_{j-1}\cup \left\{\alpha_j\right\} $
\EndFor
 \State Set $\Ac = \Ac_k$
\State Set $\bar{\Sigmam}_{A\!A} = \sum_{i\in\Ac}\bar{\sigma}_i^2\ev_i\ev_i^{\sf T}$
  \end{algorithmic}
\end{algorithm}
 
We now proceed to solve the outer optimization. In this case, the cost is given by
\begin{equation}\label{eq:gt}
g(r) = (1-\lambda)\log (1+\underbar{$w$}r) - \log (\sigma^2 + r) +\lambda\underbar{$w$} r,
\end{equation}
where {$r \eqdef \bar{\sigma}^2$}. Noticing that the above function has a single minimizer given by
\be
r=- \frac{\sigma^2}{2} + \frac{1}{2}\left( \sigma^4 -\frac{4 (\underbar{$w$}\sigma^2 -1)}{\lambda \underbar{$w$}^2} \right)^{\frac{1}{2}}
\ee
completes the proof.
\end{proof}

\section{Greedy construction of sparse attacks}\label{optimal sparse construction}
The extension to the $k$-sparse case of the solution proposed in Section~\ref{sec:single} does not get around the combinatorial optimization in~\eqref{eq:Gaussian_k_stealth_constr}. For that reason, in the following we propose a greedy construction that leverages the insight distilled in the $k=1$ case to select the set of $k$ attacked sensors. The construction is based on a classical greedy procedure that sequentially selects an observation to attack by maximizing the performance in terms of the decision at each step.
Let us denote by $\Ac$ the set of observation indices that are attacked, i.e. $\Ac\eqdef\textnormal{supp}({A^m})$.  The greedy algorithm operates by sequentially updating the entries in $\Ac$ by adding a new index in each step until  $k$ indices are selected. For that reason, the resulting entries of the attack vector are independent, and therefore, the covariance matrix of the attack vector obtained via the proposed greedy approach belongs to the set
\vspace{-2mm}
\be
\tilde{\Sc}_k\!\eqdef \bigcup_{\Kc}\!\left\{\Sm\!\in\! S_{+}^m\!\!:\Sm\!=\!\!\sum_{i\in\Kc}\sigma_i^2\ev_i\ev_i^{\sf T}\textnormal{with}\;\sigma_i\!\in\!\mathds{R}_+\!\right\},
\ee
where the union is over all subsets $\Kc\subseteq\left\{1,2, \ldots, m\right\}$ with $|\Kc|=k\leq m$. The proposed greedy construction is described in Algorithm \ref{alg:greedy}.
\begin{figure}[t!]
\centering
\includegraphics[width=8cm]{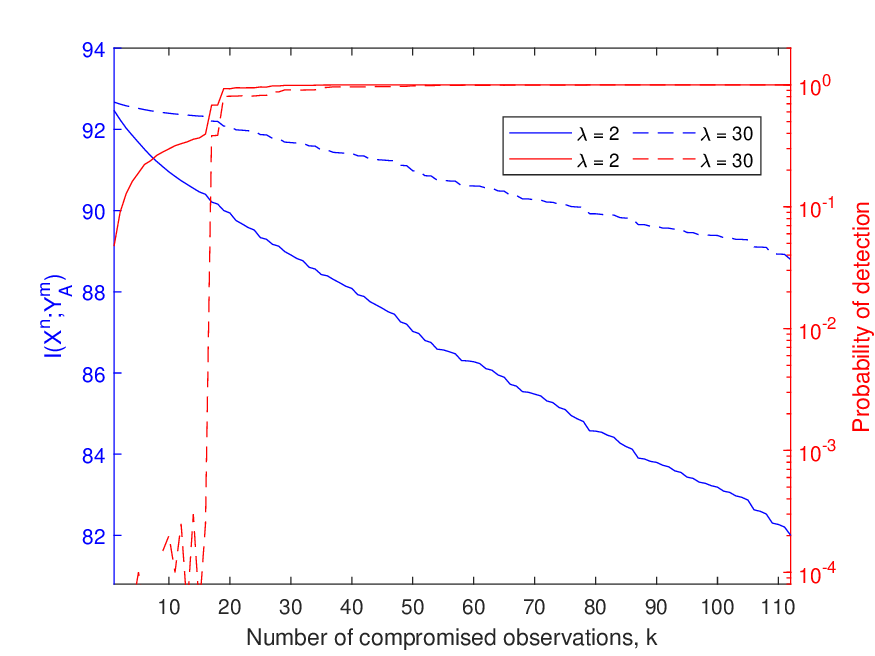}
\caption{Performance of the sparse attack in terms of mutual information, probability of detection for different values of $\lambda$ when $\textrm{SNR }=30 \textrm{dB}, \rho = 0.1, \tau = 2$ on the IEEE 30 Bus Test Case.}\label{fig:MI_prob}
\vspace{-5mm}
\end{figure}

\section{Numerical results}\label{simulation}

In this section, we present the simulation results on a direct current (DC) state estimation setting for the IEEE 30 Bus Test Case~\cite{UoW_ITC_99}. The voltage magnitudes are set to $1.0$ per unit, which implies that the state estimation is based on the observations of active power flow injections to all the buses and the active power flow between physically connected buses. The Jacobian matrix $\Hm$ is determined by the reactances of the branches and the topology of the system. MATPOWER~\cite{matpower} is adopted to generate $\Hm$.
To capture the statistical dependence between the state variables we adopt a Toeplitz model for the covariance matrix $\Sigmam_{X\!X}$ that arises in a wide range of practical settings, such as autoregressive stationary processes.
Specifically, we model the correlation between state variables $X_i$ and $X_j$ with the exponential decay parameter $\rho$ that results in $\left(\Sigmam_{X\!X}\right)_{ij} =  \rho^{|i-j|}$ with $i,j = 1,2,\ldots,n$.

In this setting, the performance of the proposed sparse stealth attack is a function of the correlation parameter $\rho$, noise variance $\sigma^2$, and the topology of the system as described by $\Hm$.
We define the signal to noise ratio (SNR) as
\be
\vspace{-1mm}
\textrm{SNR} \eqdef 10\log_{10}\left(\frac{\textrm{tr}(\textbf{H}\Sigmam_{\textrm{XX}}\textbf{H}^\textrm{\sf T})}{m\sigma^{2}}\right).
\ee
The results in this section are obtained by averaging \mbox{$2\times10^4$} realizations of the observations as described in~\eqref{eq:obs_attack}.
Fig.~\ref{fig:MI_prob} depicts the mutual information and the probability of detection that the attack constructed via Algorithm~\ref{alg:greedy} induces for different values of the number of compromised observations and the weighting parameter $\lambda$. As expected, the mutual information decreases monotonically, approximately linearly with the number of compromised observations, while the probability of detection increases monotonically. Interestingly, the probability of detection exhibits an abrupt increase that suggests a {\it threshold effect} when a critical number of compromised observations is reached. The weighting parameter $\lambda$ governs the minimum achievable probability of detection, e.g. a probability of detection of $10^{-2}$ is not attainable when $\lambda=2$. Indeed, increasing the value of $\lambda$ to $30$ yields a smaller probability of detection for small values of $k$ but the threshold effect takes place for the same number of compromised observations, for both values of $\lambda$. This suggests that the topology of the system governs the position of the threshold.

The variance of the random variables used to attack each sensor, the probability of detection, and the probability of false alarm as a function of the number of compromised observations are illustrated in Fig.~\ref{fig:prob_var_l2} and Fig.~\ref{fig:prob_var_l30} for $\lambda=2$ and $\lambda=30$, respectively.
As shown in Theorem 1, $\lambda$ is a scaling factor on the variances of the attack vector, and therefore, the values of the variance for the case $\lambda=2$ are simply scaled in the case $\lambda=30$.
\begin{figure*}[htbp]
\centering
\begin{minipage}[t]{0.49\textwidth}
\includegraphics[width=8cm]{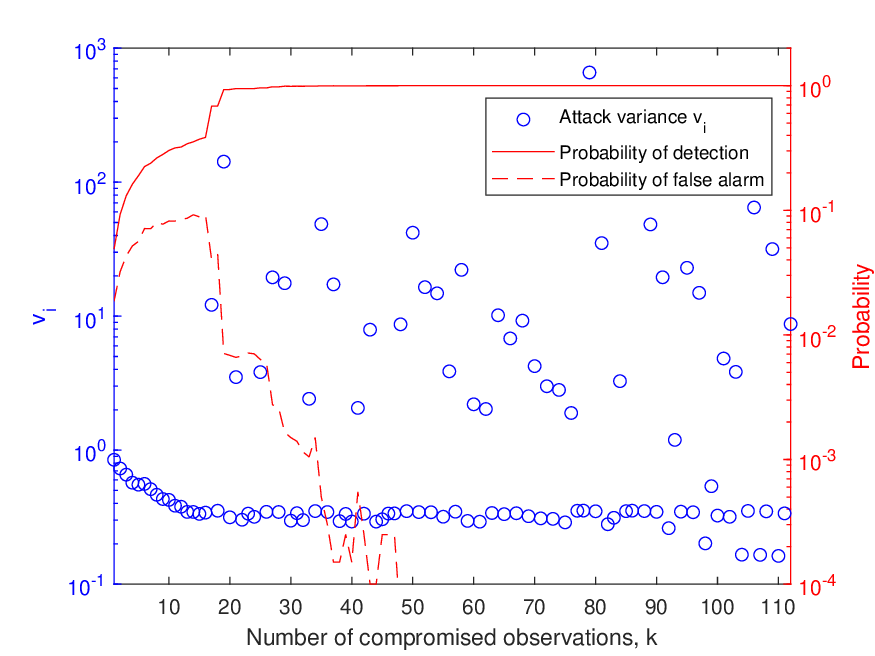}
\caption{Variance of the attack vector entries, probability of detection, and probability of false alarm of the sparse attack when $\lambda = 2, \textrm{SNR}=30\; \textrm{dB}, \rho = 0.1, \tau = 2$ on the IEEE 30 Bus Test Case.}\label{fig:prob_var_l2}
\end{minipage}
\centering
\begin{minipage}[t]{0.49\textwidth}
\includegraphics[width=8cm]{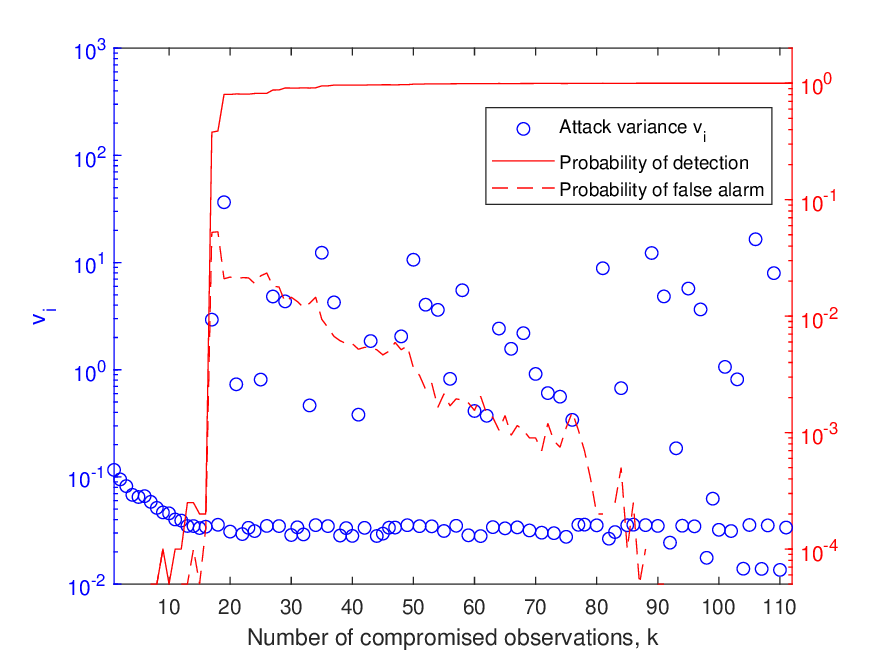}
\caption{Variance of the attack vector entries, probability of detection, and probability of false alarm of the sparse attack when $\lambda = 30, \textrm{SNR}=30\; \textrm{dB}, \rho = 0.1, \tau = 2$ on the IEEE 30 Bus Test Case.}\label{fig:prob_var_l30}
\end{minipage}
\vspace{-5mm}
\end{figure*}
There are two distinguishable attack regimes depending on the variance of the attack vector entries. Algorithm~\ref{alg:greedy}  does not yield a monotonically decreasing profile of variances. Instead the variance of the entries selected by the algorithm switches between small and large values as the number of compromised observations increases.  This suggests, that certain entries are significantly more sensitive to additive attack than others and the existence of more vulnerable sensors that are determined by the topology of the system, as shown in~\eqref{eq:sigma_opt}.
For both cases, the probability of false alarm exhibits non-monotonic behavior with the number of compromised observations, and interestingly, the change in monotonicity coincides with the threshold.
\section{Conclusion}\label{conclusion}

We have proposed an information theoretic sparse attack construction within a  smart grid Bayesian state estimation framework. The proposed attack construction minimizes the mutual information between the state variables in the smart grid and the observations obtained by the operator while minimizing the probability of detection. To that end, we have proposed a cost function that combines the mutual information and the KL divergence that is amenable to sparse attack constructions. We have theoretically characterized the single observation attack case by proving that the resulting cost function is convex and obtaining the optimal attack construction for this case. We distill the insight obtained from the single observation case to propose a sparse attack construction via a greedy algorithm that overcomes the combinatorial challenge posed by the observation selection problem. We have numerically assessed the performance of the proposed attack in the IEEE 30 Bus Test Case and observed that the probability of detection exhibits a threshold effect when a critical number of observations are compromised.

\vspace{-5mm}
\bibliographystyle{IEEEtran}
\bibliography{IT_sparse_attacks_new}

\end{document}